%% LaTeX Template for ISIT 2024
%%
%% by Stefan M. Moser, October 2017
%% (with some modifications by Tobias Koch, November 2023)
%% 
%% derived from bare_conf.tex, V1.4a, 2014/09/17, by Michael Shell
%% for use with IEEEtran.cls version 1.8b or later
%%
%% Support sites for IEEEtran.cls:
%%
%% http://www.michaelshell.org/tex/ieeetran/
%% http://moser-isi.ethz.ch/manuals.html#eqlatex
%% http://www.ctan.org/tex-archive/macros/latex/contrib/IEEEtran/
%%

\documentclass[conference,letterpaper]{IEEEtran}

%% depending on your installation, you may wish to adjust the top margin:
\addtolength{\topmargin}{9mm}

%%%%%%
%% Packages:
%% Some useful packages (and compatibility issues with the IEEE format)
%% are pointed out at the very end of this template source file (they are 
%% taken verbatim out of bare_conf.tex by Michael Shell).
%
% *** Do not adjust lengths that control margins, column widths, etc. ***
% *** Do not use packages that alter fonts (such as pslatex).         ***
%

\usepackage[utf8]{inputenc} 
\usepackage[T1]{fontenc}
\usepackage{url}
\usepackage{ifthen}
\usepackage{cite}
\usepackage[cmex10]{amsmath} % Use the [cmex10] option to ensure complicance
                             % with IEEE Xplore (see bare_conf.tex)

\usepackage{balance}
\usepackage{amssymb,amsfonts}
\usepackage{textcomp}
\usepackage{xcolor}
\usepackage{amssymb}
\usepackage{comment}
\usepackage{mathtools} % for coloneq
\usepackage{amsthm}
\usepackage{amsfonts} % for mathbb{R}
\usepackage{graphicx}
\usepackage{float}
\usepackage{caption}
\usepackage{subcaption}
\usepackage{algorithm}
\usepackage[noend]{algpseudocode}
\newcommand\numberthis{\addtocounter{equation}{1}\tag{\theequation}}
 \usepackage{bm}
 \usepackage{multirow}

%% Please note that the amsthm package must not be loaded with
%% IEEEtran.cls because IEEEtran provides its own versions of
%% theorems. Also note that IEEEXplore does not accepts submissions
%% with hyperlinks, i.e., hyperref cannot be used.

\newtheorem{theorem}{Theorem}
\newtheorem{definition}{Definition}

\newtheorem{remark}{Remark}

\newtheorem{corollary}{Corollary}
\newtheorem{proposition}{Proposition}

\newcommand\independent{\protect\mathpalette{\protect\independenT}{\perp}}
\def\independenT#1#2{\mathrel{\rlap{$#1#2$}\mkern2mu{#1#2}}}

%%
%% Common definitions
%% Only contains the definitions, not formatting information
%%

\usepackage{xspace}
\usepackage{bbm}
\usepackage{mathrsfs}
\usepackage{amssymb}
%% \usepackage{bm}

%sf

%\newcommand{\NN}{\mathrm{N}}

%% A few operator definitions
%
%
%
%
%
%
%
%
%
%
%
%

%% Mathcal (discrete sets)

\newcommand{\Ec}{\mathcal{E}}

\newcommand{\Pc}{\mathcal{P}}

\newcommand{\Sc}{\mathcal{S}}
\newcommand{\Tc}{\mathcal{T}}

\newcommand{\Xc}{\mathcal{X}}
\newcommand{\Yc}{\mathcal{Y}}

%% Mathcal bold (random discrete sets)

%% Script sets: Capacity region, set of probs, achievable rate region, etc.

%% Sequences
%% TODO: Should be cleaned up
\iffalse

\fi

%% Boldface vectors
%%

%% Typicality
%% \newcommand{\feq}{\stackrel{.}{=}}

%% Typical set with delta in place of epsilon

%% Bars

%% Hats

%% Tildes

%\newcommand{\st}{{\tilde{s}}}

%% Greek

%% Probability and expectation
\DeclareMathOperator\E{\sf E}

%% \def\E{\mathrm{E}}
%% \def\P{\mathrm{P}}

%% Gaussian capacity, rate-dist, binary entropy
%\DeclareMathOperator\C{C}
%\DeclareMathOperator\R{R}
%%\let\H\relax
%%\DeclareMathOperator\H{H}

%% Error symbol

%% Probability distributions

\newcommand{\Norm}{\mathcal{N}}

\newcommand{\U}{\mathrm{Unif}}

%% Misc

\def\textiid{i.i.d.\@\xspace}
\newcommand\iid{\ifmmode\text{ i.i.d. } \else \textiid \fi}

%\renewcommand\and{\ifmmode{\text{ and }}{\textand}}

%\renewcommand\or{\ifmmode{\text{ or }}{\textor}}

%%-------------------------------------------------------
%% Fractions
%%

%%--------------------------------------------------------
%% Spacing
%%

%% For lecture notes

\interdisplaylinepenalty=2500 % As explained in bare_conf.tex

%%%%%%
% correct bad hyphenation here
\hyphenation{op-tical net-works semi-conduc-tor}

% ------------------------------------------------------------
\begin{document}
\title{Mining Invariance from Nonlinear Multi-Environment Data: Binary Classification} 

% %%% Single author, or several authors with same affiliation:
% \author{%
%  \IEEEauthorblockN{Andrew R.~Barron}
%  \IEEEauthorblockA{Department of Statistics and Data Science\\
%                    Yale University\\
%                    New Haven, CT, USA\\
%                    Email: andrew.barron@yale.edu}
% }

%%% Several authors with up to three affiliations:
\author{\IEEEauthorblockN{Austin Goddard, Kang Du, Yu Xiang}
\IEEEauthorblockA{\textit{Department of Electrical and Computer Engineering} \\
\textit{University of Utah}\\
%Salt Lake City, UT, USA \\
\{austin.goddard,\,kang.du,\,yu.xiang\}@utah.edu}
}

\allowdisplaybreaks
%%% Many authors with many affiliations:
% \author{%
%   \IEEEauthorblockN{Andrew R.~Barron\IEEEauthorrefmark{1},
%                     Claude E.~Shannon\IEEEauthorrefmark{2},
%                     David Slepian\IEEEauthorrefmark{2},
%                     and Jacob Ziv\IEEEauthorrefmark{2}\IEEEauthorrefmark{3}}
%   \IEEEauthorblockA{\IEEEauthorrefmark{1}%
%                    Department of Statistics and Data Science, Yale University, New Haven, CT, USA,
%                     andrew.barron@yale.edu}
%   \IEEEauthorblockA{\IEEEauthorrefmark{2}%
%                     Bell Telephone Laboratories, Inc.,
%                     Murray Hill, NJ, USA,
%                     \{csh,dsl,jz\}@bell-labs.com}
%   \IEEEauthorblockA{\IEEEauthorrefmark{3}%
%                     Department of Electrical Engineering, Technion---Institute of Technology, Haifa, Israel,
%                     jz@ee.technion.ac.il}
% }

\maketitle

%%%%%%
%% Abstract: 
%% If your paper is eligible for the student paper award, please add
%% the comment "THIS PAPER IS ELIGIBLE FOR THE STUDENT PAPER
%% AWARD." as a first line in the abstract. 
%% For the final version of the accepted paper, please do not forget
%% to remove this comment!
%%

\begin{abstract}
 Making predictions in an unseen environment given data from multiple training environments is a challenging task. We approach this problem from an invariance perspective, focusing on binary classification to shed light on general nonlinear data generation mechanisms. We identify a unique form of invariance that exists solely in a binary setting that allows us to train models invariant over environments. We provide sufficient conditions for such invariance and show it is robust even when environmental conditions vary greatly. Our formulation admits a causal interpretation, allowing us to compare it with various frameworks. Finally, we propose a heuristic prediction method and conduct experiments using real and synthetic datasets. 
\end{abstract}

\section{Introduction}

It is common practice to collect observations of a set of features $X=(X_1,\dots, X_m)$ and response $Y$ from different environments to train a model. The prediction of the response in an unseen environment is often referred to as multi-environment domain adaptation, with practical applications in various fields (e.g., genetics~\cite{meinshausen2016methods} and healthcare~\cite{goddard2022invariance}). A common assumption in such problems is the principle of invariance, modularity, or autonomy~\cite{haavelmo1944probability, aldrich1989autonomy,hoover1990logic,scholkopf2012causal,dawid2010identifying,pearl2009causality}. This invariance assumption states that the conditional distribution of $Y$ given $X$ is invariant with respect to different environment.

The invariant causal prediction (\textsf{ICP}) framework~\cite{peters2016causal}, along with its various extensions~\cite{heinze2018invariant,pfister2019invariant}, employ the invariance principle to identify invariant predictors across environments. Following this framework, various domain adaptation approaches have been developed~\cite{rojas2018invariant,rothenhausler2021anchor,pfister2021stabilizing}. Specifically, the stabilized regression (\textsf{SR})~\cite{pfister2021stabilizing} approach relies on a weaker form of invariance dependent on expectation as opposed to probability. The common assumption for the approaches mentioned is that the assignment of $Y$ does not change over environments. In a causal sense, from which much of the literature in this area stems, this is referred to as an \emph{intervention} on $Y$~\cite{pearl2009causality}. When $Y$ is intervened, the invariance principle, as well as the frameworks mentioned above,  fail. In a series of recent works\cite{du2023learning,du2023generalized}, an alternative approach called the invariant matching property (\textsf{IMP}) has been developed to detect \emph{linear} invariant models in a \emph{regression} setting even when the assignment of $Y$ is altered over environment. 

%Additionally, an analogous problem referred to as sparse joint shift (SJS)~\cite{chen2022estimating} has been studied and applied to classification tasks~\cite{tasche2023sparse}. Aside from these works, we provide clear sufficient conditions granting successful prediction of Y as well as a useful method to do so.  

In this work, we extend general principles developed in~\cite{du2023learning,du2023generalized} to the binary classification setting as an attempt to generalize to nonlinear settings. The proposed approach works even when data-generating models change over environments (e.g., $Y$ can be generated using a probit model for one environment and a logistic model in another). Additionally, the approach is not constrained by the data type, meaning it can be useful on continuous, discrete, or categorical variables. 

\section{Problem Formulation}\label{sec::formulation}
For different environmental conditions indexed by the set $\Ec$, we have a random vector $X=(X_1,\dots, X_m)$ and a binary random variable $Y$ whose elements form a joint distribution $\Pc_e := \Pc_e^{X,Y}$ dependent on $e\in\Ec$. Denote $X$ and $Y$ as $X^e$ and $Y^e$ for a specific $e\in\Ec$, respectively. The supports of $X$ and $Y$ are $\Xc=\mathbb{R}^m$ and $\Yc=\{0,1\}$, respectively. Let $X_{S}$ be a random vector containing the elements in $X$ indexed by the set $S\subseteq\{1,\dots,m\}$, and let $\Xc_S$ be its support. To simplify notation, let $X_0^e:=Y^e$. For each $e\in\Ec$, we keep the distribution $\Pc_e$ general, with the exception that there exists an $X^e_i$ generated according to the form 
\begin{equation}\label{equ::Xi_def}
   X^e_i = g(X^e_{S_i}) + \epsilon^e, \text{ for some }i\in\{1,\dots,m\}, 
\end{equation}
where $X^e_{S_i}$, for $S_i\subseteq\{0,\dots,m\}\backslash i$, represents the variables that directly effect $X^e_i$,  and $\epsilon^e$ is an independent, zero mean, noise variable. We assume the output of the function $g$ is not constant with regards to any of its inputs. 

Additionally, while the function $g$ does not change over environment, the distribution of $\epsilon^e$ can change arbitrarily as long as the mean of the distribution remains zero. Aside from a binary $Y$ and the form of $X_i^e$ in~\eqref{equ::Xi_def}, we make no assumptions on the distribution or functional form of any variable. As such, this formulation applies to any set of features, be it continuous, discrete, or a mixture of the two. 

We assume only a subset of all environments are observed and denote this set by $\Ec_{\text{obs}}\subseteq\Ec$. Where $\Ec_{\text{obs}} = \Ec_{\text{train}}\cup\{e^{\text{test}}\}$, and $Y^{\text{test}}:=Y^{e^{\text{test}}}$, our goal is to make predictions on $Y^{\text{test}}$, given a set of training environments $\Ec_{\text{train}}$.
 As such, we aim to find a function $\phi_e:\Xc\xrightarrow{}\mathcal{W}$ such that, the probability of $Y$ given $\phi_e(X)$ does not vary over any environment. Specifically, for all $w\in\mathcal{W}$ and $e,h\in\Ec_{\text{obs}}$,
\begin{equation}\label{equ::invar}
    \Pc_e(Y|\phi_e(X)=w) = \Pc_h(Y|\phi_h(X)=w).
\end{equation}
As $Y$ is binary, it is equivalent to write~\eqref{equ::invar} in the form: $\E_{\Pc_e}[Y|\phi_e(X)=w] = \E_{\Pc_h}[Y|\phi_h(X)=w]$,  for all  $w\in\mathcal{W}$ and $e,h\in\Ec_{\text{obs}}$. It is well-known that~\eqref{equ::invar} is satisfied if $\phi_e(X) = X_{S_Y}$ and for $S_Y\subseteq\{1,\dots,m\}$,
\begin{equation}\label{equ::parentY}
    Y^e = f(X_{S_Y}^e) + \epsilon_Y,
\end{equation} 
where $\epsilon_Y$ is an independent noise that does not vary over environment~\cite{peters2016causal}. However, we are interested in a more general setting where \emph{the function $f$ and distribution of the noise can vary over environment}. From a causal perspective, this would indicate that $Y^e$ had been \emph{intervened} (see Section~\ref{sec::causal}). In such a setting, $\phi_e(X) = X_{S_Y}$ is no longer useful and other approaches must be considered. We now consider one such alternative, starting with a motivating example. 

\section{Motivating Example}

Consider the following setting with $X^e = (X^e_1,X^e_2,X^e_3)$.  Let $X^e_1$ and $X^e_2$ be independent and follow $X^e_1 \sim \Norm(\mu_{1}^e,\sigma^2_{1})$ and $X^e_2 \sim \Norm(\mu_{2}^e,\sigma^2_{2})$. The variable $Y^e$ is generated such that $Y^e|X^e_1,X^e_2$ forms a probit model. Specifically, 
\begin{equation*}
    Y^e  =\begin{cases}
	   1, & \text{if $\beta_1^eX^e_1 + \beta_2 X^e_2 + \epsilon_Y > 0 $},\\
       0, & \text{otherwise}.
	\end{cases}
\end{equation*}
Following a similar form as~\eqref{equ::Xi_def}, $X^e_3$ is linear given $Y^e$ so that
\begin{equation*}
    X^e_3 = \begin{cases}
	   \gamma_1X^e_1  + \epsilon_3, & \text{if $Y^e = 1$},\\
       \gamma_0X^e_1 + \epsilon_3, & \text{if $Y^e = 0$}.
	\end{cases}
\end{equation*}
The noise variables $\epsilon_Y$ and $\epsilon_3$ are \iid $\Norm(0, \sigma^2)$. Suppose we wish to predict $Y^e$ given only $X^e_1$. 
Predicting $Y^e$ for a particular $e\in\Ec$ is difficult as $\beta_1^e$ and $\mu^e_2$ depend on $e$. i.e., 
\begin{equation} \label{equ::pygxx}
    \E_{\Pc_e}[Y|X_1=x_1] = \Phi\left(\frac{\beta_1^ex_1 + \beta_2\mu^e_2}{\sqrt{(\beta_2\sigma_2)^2 + \sigma^2}} \right),
\end{equation}
where $\Phi$ is the cumulative distribution function of a standard normal random variable. As~\eqref{equ::pygxx} varies over environment, it is not practical to use $\E_{\Pc_e}[Y|X_1]$ to estimate $Y^e$ on different environments. Even while conditioning on both $X^e_1$ and $X^e_2$ (the variables that directly affect $Y^e$), the variance (w.r.t. environment) still remains through $\beta_1^e$.

We can, however, decompose~\eqref{equ::pygxx} into various variant and invariant components such that $\E_{\Pc_e}[Y|X_1=x_1]$ becomes the following (see the proof of Proposition~\ref{prop::suff} for a general case),
\begin{align*}
     \frac{\E_{\Pc_e}[X_3|X_1=x_1] - \E_{\Pc_e}[X_3|X_1=x_1,Y=0]}{\E_{\Pc_e}[X_3|X_1=x_1,Y=1] - \E_{\Pc_e}[X_3|X_1=x_1,Y=0]}, \numberthis\label{equ::examp_matching}
\end{align*}
where $\E_{\Pc_e}[X_3|X_1=x_1]$ is 
\begin{align*}
     \Phi\left(\frac{\beta_1^ex_1 + \beta_2\mu^e_2}{\sqrt{(\beta_2\sigma_2)^2 + \sigma^2}} \right) (\gamma_1 - \gamma_0)x_1 + \gamma_0x_1, \numberthis
\end{align*}
and $\E_{\Pc_e}[X_3|X_1=x_1,Y=y]$ is $\gamma_1x_1$ if $y=1$ and $\gamma_0x_1$ if $y=0$. We note that the variance (w.r.t environment) contributed by $\beta_1^e$ and $\mu_2^e$ is completely accounted for in the term $\E_{\Pc_e}[X_3|X_1]$ and that $\E_{\Pc_e}[X_3|X_1,Y]$ is invariant over environment. Thus,~\eqref{equ::invar} holds for the function $\phi_e(X) = (X_1,\E_{\Pc_e}[X_3|X_1])$. In addition to this, we also note that conditioning on both $X_1$ and $X_2$ leads to a similar invariance; we only condition on $X_1$ in this example for simplicity. 

This invariance does not hold if we replace $X^e_3$ with any other variable. For example, suppose we were to estimate $Y^e$, replacing $X^e_3$ with $X^e_2$. We can still decompose~\eqref{equ::pygxx} similarly to~\eqref{equ::examp_matching} by replacing $X^e_3$ with $X^e_2$. As $\E_{\Pc_e}[X_2|X_1] = \mu^e_2$ does not contain $\beta_1^e$, the portion of $\E_{\Pc_e}[Y|X_1]$ that contains $\beta_1^e$ must reside in $\E_{\Pc_e}[X_2|X_1,Y]$. i.e., $\E_{\Pc_e}[X_2|X_1,Y]$ is not invariant over environments as is $\E_{\Pc_e}[X_3|X_1,Y]$. Thus, the function $\phi_e(X) = (X_1,\E_{\Pc_e}[X_2|X_1])$ will no longer  satisfy~\eqref{equ::invar}.

To further illustrate the difference in selecting $X_3^e$ over $X_2^e$, suppose we wish to estimate on a new environment $e^{\text{test}}$. While we have access to $X^{\text{test}}$, we can easily construct $\E_{\Pc_{\text{test}}}[X_i|X_1]$ for either $i\in\{2,3\}$. We cannot, however, use $Y^{\text{test}}$ to construct our estimate, and $\E_{\Pc_{\text{test}}}[X_i|X_1,Y]$ must be obtained by leveraging invariances over environment. Thus, for either $i\in\{2,3\}$, we construct the estimate 
\begin{equation}
    \hat{Y}^{\text{test}}_i =: \frac{\E_{\Pc_{\text{test}}}[X_i|X_1]-\E_{\Pc_e}[X_i|X_1,Y=0]}{\E_{\Pc_e}[X_i|X_1,Y=1]-\E_{\Pc_e}[X_i|X_1,Y=0]},
\end{equation}
where $e\in\Ec_{\text{train}}$. As $\E_{\Pc_e}[X_3|X_1,Y]$ is invariant and $\E_{\Pc_e}[X_2|X_1,Y]$ is not invariant as discussed above, $\hat{Y}^{\text{test}}_3$ will provide a good estimate of $Y^{\text{test}}$, while $\hat{Y}^{\text{test}}_2$ will not. 

In Fig.~\ref{fig:example} we compare  $\hat{Y}^{\text{test}}_3$ and  $\hat{Y}^{\text{test}}_2$ by simulating $(x^{\text{test}},y^{\text{test}})$ pairs for a set of specific parameters. The estimate $\hat{Y}^{\text{test}}_2$ does not fit the data as many $x_1^{\text{test}}$ corresponding to $y^{\text{test}}=0$ will be incorrectly classified to one. However, this is not the case when $\hat{Y}^{\text{test}}_3$ is used, and the fit is greatly improved (Fig.~\ref{fig:example}). The poor fit on $\hat{Y}^{\text{test}}_2$ is a result of $\E_{\Pc_e}[X_2|X_1,Y]$ varying across environments. 

\begin{center}
\begin{figure}[h]
  \begin{subfigure}[b]{0.45\columnwidth}   
\vspace{-2em}\includegraphics[width=\linewidth]{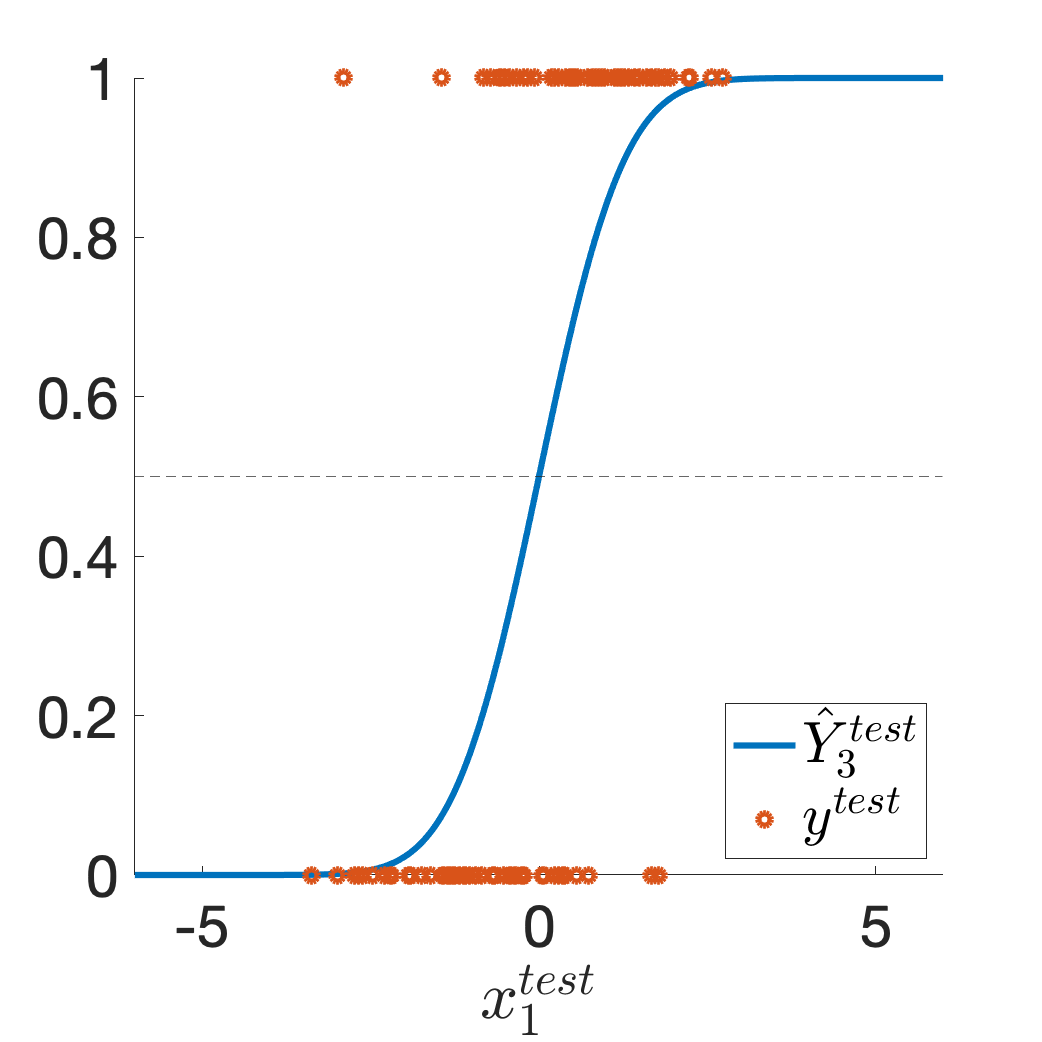}
    \label{fig:EYgZ1X3}
  \end{subfigure}
  \begin{subfigure}[b]{0.45\columnwidth}
\vspace{-2em}
\includegraphics[width=\linewidth]{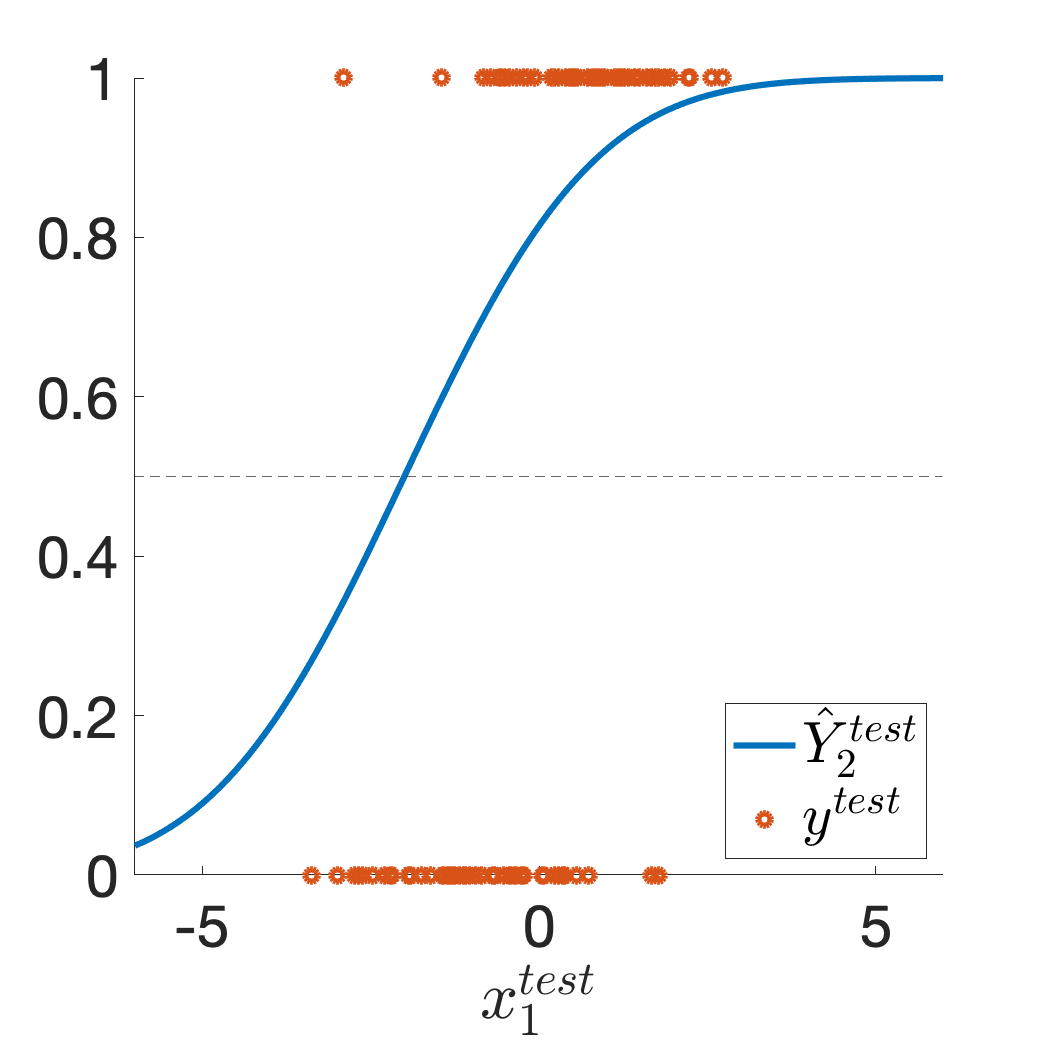}
    \label{fig:EYgZ2X3}
  \end{subfigure}
  \vspace{-1.5em}
  \caption{Comparisons of $\hat{Y}^{\text{test}}_3$ (left) and $\hat{Y}^{\text{test}}_2$ (right), where $\beta_1^e = 2$, $\mu^e_2 = 1$, $\beta_2^{\text{test}} = 0$, and $\mu^{\text{test}}_2 = -1$.}\label{fig:example}
  \vspace{-2em}
\end{figure}
\end{center}
\section{The Binary Invariant Matching Property}
A deterministic relationship such as the one in~\eqref{equ::examp_matching} has been previously referred to as \emph{matching}~\cite{du2023learning}, and can be generalized to the formulation outlined in Section~\ref{sec::formulation}.
\begin{definition}
    For $k\in\{1,\dots,m\}$, $S \subseteq \{1,\dots,m\}\backslash k$, and $h(X_S,Y) := \E_{\Pc_e}[X_k|X_S,Y]$, the pair $(k,S)$ satisfies the binary invariant matching property (bIMP)\footnote{There are \emph{degenerate} cases when $h(X_S,0) = h(X_S,1)$, for which the tower property implies $\E_{\Pc_e}[X_k|X_S] = \E_{\Pc_e}[h(X_S,Y)|X_S] = h(X_S,0)$, and the ratio in~\eqref{equ::EjgXs} reduces to $0$ divided by $0$.} if,
    \begin{equation}\label{equ::EjgXs}
    \E_{\Pc_e}[Y|X_S] = \frac{\E_{\Pc_e}[X_k|X_S] - h(X_S,0)}{h(X_S,1) - h(X_S,0)}, 
    \end{equation}
    holds for all $e\in\Ec_{\text{obs}}$, where $h(X_S,Y)$ does not depend on $e$.
\end{definition}

%Additionally, an analogous problem referred to as sparse joint shift (SJS)~\cite{chen2022estimating} has been studied and applied to classification tasks~\cite{tasche2023sparse}. Aside from these works, we provide clear sufficient conditions granting successful prediction of Y as well as a useful method to do so.  

As seen in the example, there are a variety of choices for $k$ and $S$, not all of which lead to invariant representations. We now detail the sufficient conditions for which a pair $(k,S)$ satisfies the bIMP (see Appendix for the proof).

\begin{proposition} \label{prop::suff}
    Let $ k \in \{1,\ldots,m\}$ and $S = R \cup Q$ where $R,Q \subseteq \{1,\ldots,m\} \setminus k$ and $R \cap Q = \varnothing$. The pair $(k,S)$ satisfies the bIMP if, for every $e\in\Ec_{\text{obs}}$, 
    \begin{enumerate}
        \item $X^e_k = g(X^e_R,Y^e) + \epsilon^e$ as in~\eqref{equ::Xi_def}\ ,
        \item  $X^e_Q \independent X^e_k \ |\  (X^e_R,\ Y^e)$\ .
    \end{enumerate}
\end{proposition}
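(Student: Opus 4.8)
The plan is to verify the identity~\eqref{equ::EjgXs} directly: first pin down the function $h(X_S,Y)=\E_{\Pc_e}[X_k|X_S,Y]$ using the two hypotheses, then expand $\E_{\Pc_e}[X_k|X_S]$ via the tower property and exploit that $Y\in\{0,1\}$ to invert an affine relation.

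First I would compute $h$. Writing $X_S=(X_R,X_Q)$, the second hypothesis $X^e_Q\independent X^e_k\mid(X^e_R,Y^e)$ lets me drop $X_Q$ from the conditioning set, so $\E_{\Pc_e}[X_k|X_S,Y]=\E_{\Pc_e}[X_k|X_R,Y]$. The first hypothesis gives $X^e_k=g(X^e_R,Y^e)+\epsilon^e$ as in~\eqref{equ::Xi_def}, with $\epsilon^e$ independent of $(X^e_R,Y^e)$ and of zero mean, whence $\E_{\Pc_e}[X_k|X_R,Y]=g(X_R,Y)$. Thus $h(X_S,Y)=g(X_R,Y)$, and since $g$ does not change over environments, neither does $h$; this establishes the environment-invariance clause in the definition of the bIMP, and it is precisely where the assumptions ``$g$ does not depend on $e$'' and ``the noise has mean zero'' are used.

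Second I would apply the tower property conditioning on $Y$: $\E_{\Pc_e}[X_k|X_S]=\E_{\Pc_e}[h(X_S,Y)|X_S]$. Because $Y$ is binary, the right side equals $h(X_S,1)\,\Pc_e(Y=1|X_S)+h(X_S,0)\,\Pc_e(Y=0|X_S)$, and substituting $\Pc_e(Y=1|X_S)=\E_{\Pc_e}[Y|X_S]$ and $\Pc_e(Y=0|X_S)=1-\E_{\Pc_e}[Y|X_S]$ rewrites it as $h(X_S,0)+\big(h(X_S,1)-h(X_S,0)\big)\E_{\Pc_e}[Y|X_S]$. On the non-degenerate event $h(X_S,1)\neq h(X_S,0)$ (the case flagged in the footnote of the definition), dividing through recovers~\eqref{equ::EjgXs}; and since this holds with the same $h$ for every $e\in\Ec_{\text{obs}}$, the pair $(k,S)$ satisfies the bIMP.

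The computation is short, so I do not expect a real obstacle; the step needing the most care is the first one --- correctly invoking the second hypothesis to pass from $\E_{\Pc_e}[X_k|X_R,X_Q,Y]$ to $\E_{\Pc_e}[X_k|X_R,Y]$, and ensuring the noise $\epsilon^e$ in the first hypothesis is independent of the full vector $(X^e_R,Y^e)$ (so that its conditional mean given $(X_R,Y)$ vanishes), not merely of the individual coordinates. The binary structure of $Y$ is what makes the second step collapse to an invertible affine relation; the argument would not go through for a general response.
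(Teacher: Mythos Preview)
Your proposal is correct and follows essentially the same argument as the paper: compute $h(X_S,Y)=g(X_R,Y)$ from the two hypotheses, then use the binary structure of $Y$ to turn $\E_{\Pc_e}[X_k|X_S]$ into an affine function of $\E_{\Pc_e}[Y|X_S]$ and invert. The only cosmetic differences are order and packaging---the paper first derives the affine identity via a density decomposition and integration (assuming continuity ``without loss of generality'') and only afterwards shows $h$ is environment-free, whereas you invoke the tower property directly and handle invariance first; your route is slightly cleaner and sidesteps the continuity caveat, but the content is the same.
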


What remains is to show that the bIMP can be used to satisfy the invariance principle in~\eqref{equ::invar}, and thus, can be beneficial in predicting on unknown environments, as shown below.

\begin{theorem} \label{the:1} 
Let $ k \in \{1,\ldots,m\}$ and $S = R \cup Q$ where $R,Q \subseteq \{1,\ldots,m\} \setminus k$ and $R \cap Q = \varnothing$.
When $\phi_e(X) = (X_R,X_Q,\E_{\Pc_e}[X_k|X_R,X_Q])$,~\eqref{equ::invar} holds if the pair $(k,S)$ satisfies the bIMP. 
\end{theorem}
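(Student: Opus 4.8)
The plan is to show directly that, under the bIMP, the conditional expectation $\E_{\Pc_e}[Y \mid \phi_e(X) = w]$ is the same for every $e \in \Ec_{\text{obs}}$, which by the remark after~\eqref{equ::invar} is equivalent to~\eqref{equ::invar}. Write $w = (x_R, x_Q, t)$, where $t$ is the realized value of the coordinate $\E_{\Pc_e}[X_k \mid X_R, X_Q]$. The key observation is that conditioning on $\phi_e(X)$ is the same as conditioning on $(X_R, X_Q)$ \emph{together with} the event that the (deterministic) function $(x_R,x_Q) \mapsto \E_{\Pc_e}[X_k \mid X_R = x_R, X_Q = x_Q]$ takes the value $t$. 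So I would first argue that
\[
    \E_{\Pc_e}[Y \mid \phi_e(X) = (x_R,x_Q,t)] = \E_{\Pc_e}[Y \mid X_R = x_R, X_Q = x_Q],
\]
provided the third coordinate is consistent (i.e.\ $t = \E_{\Pc_e}[X_k \mid X_R=x_R,X_Q=x_Q]$); since the third coordinate is a measurable function of $(X_R,X_Q)$, it adds no information, and the displayed identity holds. The only subtlety is that one must restrict attention to values $w$ in the range of $\phi_e$, which is exactly the set of consistent triples — outside that range the conditional expectation is irrelevant for~\eqref{equ::invar}, which quantifies over $w \in \mathcal{W}$ that are actually attained.

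Next I would invoke the bIMP with $S = R \cup Q$. Equation~\eqref{equ::EjgXs} gives, for every $e \in \Ec_{\text{obs}}$,
\[
    \E_{\Pc_e}[Y \mid X_R = x_R, X_Q = x_Q] = \frac{\E_{\Pc_e}[X_k \mid X_R=x_R,X_Q=x_Q] - h(x_R,x_Q,0)}{h(x_R,x_Q,1) - h(x_R,x_Q,0)},
\]
where $h(X_S,Y) = \E_{\Pc_e}[X_k \mid X_S, Y]$ does \emph{not} depend on $e$ by hypothesis. Now combine the two displays: on consistent triples $w = (x_R,x_Q,t)$ we get
\[
    \E_{\Pc_e}[Y \mid \phi_e(X) = w] = \frac{t - h(x_R,x_Q,0)}{h(x_R,x_Q,1) - h(x_R,x_Q,0)}.
\]
The right-hand side depends only on $w$ — not on $e$ — because $h$ is environment-independent and $t$ is a component of $w$. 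Hence for any $e, h \in \Ec_{\text{obs}}$ and any $w$ attained under both, the two conditional expectations agree, which is precisely~\eqref{equ::invar}. (One should note that if $w$ is attained under $\Pc_e$ but not under $\Pc_h$, the identity~\eqref{equ::invar} is vacuous there; and the degenerate case $h(X_S,0) = h(X_S,1)$ is excluded exactly as in the footnote to the bIMP definition, so the denominator is nonzero.)

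I expect the main obstacle to be the careful handling of the first step — making precise that conditioning on $\phi_e(X)$, whose last coordinate is a deterministic function of the first two, reduces to conditioning on $(X_R, X_Q)$, and that the quantifier "for all $w \in \mathcal{W}$" in~\eqref{equ::invar} should be read over the range of $\phi_e$ (and $\phi_h$). Everything after that is just substitution of the bIMP identity and the observation that the resulting expression is manifestly a function of $w$ alone. No real calculation is needed; the content is entirely in recognizing that $\E_{\Pc_e}[X_k \mid X_R, X_Q]$ is the one environment-dependent quantity, and the bIMP is precisely the statement that this quantity, once fed through the fixed affine map determined by $h$, recovers the (otherwise unavailable) $\E_{\Pc_e}[Y \mid X_R, X_Q]$.
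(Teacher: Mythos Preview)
Your proposal is correct and follows essentially the same argument as the paper: reduce conditioning on $\phi_e(X)$ to conditioning on $(X_R,X_Q)$ because the third coordinate is a deterministic function of the first two, then apply the bIMP identity~\eqref{equ::EjgXs} and substitute $t$ for $\E_{\Pc_e}[X_k\mid X_R,X_Q]$ to obtain an expression depending only on $w$. If anything, your write-up is slightly more careful than the paper's --- you use $h(x_R,x_Q,y)$ rather than $g(x_R,y)$ (the paper's notation tacitly borrows from Proposition~\ref{prop::suff}, whereas the theorem only assumes the bIMP), and you spell out the issues about the range of $\phi_e$ and the degenerate denominator that the paper leaves implicit.
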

\begin{proof}

 Let $\ell^e(X_R,X_Q):=\E_{\Pc_e}[X_k|X_R,X_Q]$ and $\phi_e(X) = (X_R,X_Q,\ell^e(X_R,X_Q))$. Since $(k,S)$ satisfies the bIMP and $\ell^e(X_R,X_Q)$ is a function of $X_R$ and $X_Q$,
\begin{align*}
      \E_{\Pc_e}&[Y|\phi_e(X) = (x_Q,x_R,z)] \\
      &=\E_{\Pc_e}[Y|X_R=x_R,X_Q=x_q,\ell^e(X_R,X_Q)=z] \\
      &= \frac{z - g(x_R,0)}{g(x_R,1) - g(x_R,0)}. \numberthis\label{equ::E_final}
\end{align*}
Thus,~\eqref{equ::invar} holds as~\eqref{equ::E_final} does not vary over $e\in\Ec_{\text{obs}}$.  
\end{proof}
% \begin{remark}
%     In this work, we focus specifically on settings where $Y^e$ is binary. However, there does exist a corresponding matching property with sufficient conditions similar to those in Proposition~\ref{prop::suff} for cases when $Y^e$ is multi-class or continuous. We leave the analysis for the long version of this work.
% \end{remark}

\begin{remark}
    It turns out that the form of bIMP has been studied in a concurrent work~\cite{tasche2023sparse} under the name of sparse joint shift (SJS), first developed in~\cite{chen2022estimating}. However, our proposed methods in Section~\ref{sec:method} as well as the causal perspective are completely different from that line of research.
\end{remark}

\subsection{A Causal Perspective}\label{sec::causal}
While the sufficient conditions in Theorem~\ref{the:1} may seem abstract, we now show that, in fact, they have a specific meaning in a causal sense. To do so, we introduce the structural causal model (SCM)~\cite{pearl2009causality}. Here, $X^e$ and $Y^e$ are part of an SCM $\Sc^e$ that varies over environment such that 
\begin{equation} \label{equ:SCM}
    \Sc^e: \begin{cases}
    Y^e := f_Y^e(X^e_{PA(Y^e)}\, ,\ \epsilon^e_Y),\\
    X_1^e := f_1^e(X^e_{PA(X_1^e)}\, ,\ \epsilon^e_1),\\
    \qquad \vdots \\
    X_m^e := f_m^e(X^e_{PA(X_m^e)}\, ,\ \epsilon^e_m).	\end{cases}
\end{equation}
where $\epsilon_1^e\dots\epsilon_m^e,\epsilon_Y^e$ are independent noise variables. To simplify notation, let $X_0^e:=Y^e$. Thus, $PA(X_i^e)\subseteq\{0,\dots,1\}$ denotes the set indexed by the direct causal parents of $X_i^e$ for all $i\in\{0,\dots,m\}$.

As in Section~\ref{sec::formulation}, $Y^e$ is binary. Additionally, at least one structural assignment (i.e., $f_i^e(\cdot)$) in $\Sc^e$ is an additive noise function that does not vary over environment. Specifically, for some $i\in\{0,\dots,m\}$,  let $f_i^e(X_{PA(X_i^e)}^e,\epsilon^e_i) = g(X_{PA(X_i^e)}^e) + \epsilon^e_i$, where $\epsilon^e_i$ has zero mean. An intervention on a variable from $\{X_1^e,\dots,X_m^e,Y^e\}$ occurs if the structural assignment changes for some $e\in\Ec$. Relating the SCM to the formation in Section~\ref{sec::formulation} gives insight into the types of interventions that may occur.  While many methods~\cite{peters2016causal,pfister2021stabilizing,du2023learning} make various assumptions on the types of interventions (e.g., shifts in the mean or variance), the setting in~\eqref{equ:SCM} allows for very general interventions, including general interventions on $Y^e$, which many other approaches do not allow.

Given  $\Sc^e$ for all $e\in\Ec_{\text{obs}}$, we can express the conditions of Proposition~\ref{prop::suff} in the language of SCMs, detailed below.

\begin{corollary}\label{cor:causal1}
Let $ k \in \{1,\ldots,m\}$ and $S = R \cup Q$ where $R,Q \subseteq \{1,\ldots,m\} \setminus k$ and $R \cap Q = \varnothing$. For the SCM $\Sc^e$, the pair $(k,S)$ satisfies the bIMP for all $e\in\Ec_{\text{obs}}$ if the following cases hold. 
\begin{enumerate}
    \setlength\itemsep{0em}
    \item $X_k^e = g(X^e_{PA(X_k^e)}) + \epsilon^e_k$\ ,
    \item $X_R^e$ and $Y^e$ constitute the parents of $X_k^e$\ ,
    \item The variables in $X_Q^e$ can be any non-descendants of $X_k^e$.
\end{enumerate}
\end{corollary}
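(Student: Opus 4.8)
The plan is to deduce the two distributional hypotheses of Proposition~\ref{prop::suff} from the three SCM-level conditions, and then simply invoke that proposition for each $e\in\Ec_{\text{obs}}$. The first hypothesis is almost a restatement: by condition~2 the parents of $X_k^e$ in $\Sc^e$ are exactly the variables indexed by $R\cup\{0\}$ (recall $X_0^e:=Y^e$), so condition~1 of the corollary reads $X_k^e=g(X_R^e,Y^e)+\epsilon_k^e$ with $\epsilon_k^e$ a zero-mean independent noise, which is precisely condition~1 of Proposition~\ref{prop::suff} under the decomposition $S=R\cup Q$.

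For the second hypothesis, $X_Q^e\independent X_k^e\mid(X_R^e,Y^e)$, I would appeal to the Markov property of the SCM. Since $\epsilon_1^e,\dots,\epsilon_m^e,\epsilon_Y^e$ are independent in~\eqref{equ:SCM}, the law $\Pc_e$ is Markov with respect to the DAG induced by $\Sc^e$; in particular the local Markov property holds, so $X_k^e$ is conditionally independent of the collection of its non-descendants given its parents $X_{PA(X_k^e)}^e=(X_R^e,Y^e)$. Because $Q\subseteq\{1,\dots,m\}\setminus k$ we have $k,0\notin Q$, and $R\cap Q=\varnothing$, so the variables in $X_Q^e$ — which by condition~3 are all non-descendants of $X_k^e$ — are disjoint from $X_k^e$ itself and from the parent set $(X_R^e,Y^e)$. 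Applying the decomposition property of conditional independence to pass from the full set of non-descendants down to the subset $X_Q^e$ then yields $X_Q^e\independent X_k^e\mid(X_R^e,Y^e)$, which is condition~2 of Proposition~\ref{prop::suff}.

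Both hypotheses of Proposition~\ref{prop::suff} being verified for every $e\in\Ec_{\text{obs}}$, that proposition gives that $(k,S)$ satisfies the bIMP, which completes the argument. I expect the only delicate point to be the bookkeeping around the term ``non-descendant'': one must check that $X_Q^e$ contains neither $X_k^e$ nor any parent of $X_k^e$, so that the local Markov statement applies and can then be narrowed by decomposition — this is exactly what the index constraints $k\notin Q$, $0\notin Q$, and $R\cap Q=\varnothing$ secure. I would also note explicitly that the independence of the exogenous noises in~\eqref{equ:SCM} is what licenses using the DAG Markov property in the first place.
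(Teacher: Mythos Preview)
Your argument is correct and matches the paper's own justification: the paper likewise reduces Corollary~\ref{cor:causal1} to Proposition~\ref{prop::suff} by observing that conditions~1--2 yield $X_k^e=g(X_R^e,Y^e)+\epsilon_k^e$, and that in an SCM a node is independent of its non-descendants given its parents, which supplies $X_Q^e\independent X_k^e\mid(X_R^e,Y^e)$. Your version is in fact more careful than the paper's one-paragraph sketch, since you make explicit the use of the decomposition property and the index checks $k\notin Q$, $0\notin Q$, $R\cap Q=\varnothing$ needed to isolate $X_Q^e$ from the parent set.
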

%\begin{proof}
%The three sufficient conditions in Proposition~\ref{prop::suff} have characterizations in the SCM $\Sc^e$. Thus, as the conditions in Proposition~\ref{prop::suff} satisfy the bIMP, so do the conditions above. We now detail the connections between the two. 
The first condition in Proposition~\ref{prop::suff} is analogous to the first and second condition above as $PA(X_k^e) = (X_S,Y)$. Additionally, in an SCM, any variable conditioned on its parents is independent of any non-descendant. As such, the set $X_Q^e$ can be any non-descendant of $X_k^e$, bridging the final conditions in Proposition~\ref{prop::suff} and Corollary~\ref{cor:causal1}. 
%\end{proof}

In many cases, the set $Q$ can be quite inclusive despite what may seem like a strong independence condition in Proposition~\ref{prop::suff}. In Corollary~\ref{cor:causal1}, we learn that, in a causal sense, $X_Q^{e}$ can be any non-descendant of $X_k^e$. For example, if half of the predictors in an SCM are ancestors of $Y^e$, while the other half are descendants, then the set $Q$ indexes at least half of all predictors (and potentially many more).

 %While our approach has causal ties similar to~\cite{peters2016causal,pfister2021stabilizing}, it is unique as it does not require that the data contain any causal parents of the target $Y$. 

\section{Proposed Method}
\label{sec:method}
For each $e\in\Ec_{\text{train}}$, we have $n_e$ samples, represented as a matrix $\bm{X}^e\in\mathbb{R}^{n_e\times m}$, and a vector $\bm{Y}^e\in \{0,1\}^{n_e}$ (see~\cite{goddard2023error} for a discussion on the impact of different environments). Additionally, we have $n_{\text{test}}$ samples in the test environment, and we denote $\bm{X}^{\text{test}}\in\mathbb{R}^{n_{\text{test}}\times m}$ and $\bm{Y}^{\text{test}}$ as the predictor matrix and target vector for the environment $e^{\text{test}}$, respectively. We denote $\bm{X}$ as the pooled predictor matrix over all $e\in\Ec_{\text{train}}$, and $\bm{X}_{Y=y}$ as the matrix comprising the rows of $\bm{X}$ in which $Y=y$, for $y\in\{0,1\}$. Let $\bm{X}^{-e}$ be the matrix of samples indexed only by those samples not in $e\in\Ec_{\text{train}}$. 

We now leverage insights gained from Theorem~\ref{the:1} and the bIMP to develop a practical method for estimation in unknown environments. At test time, we do not have access to $Y^{\text{test}}$. As such, one cannot say with definitive assurance that~\eqref{equ::invar} holds for all $e\in\Ec_{\text{obs}}$. Thus, the best that can be done in such settings is to identify a $\phi_e$ such that~\eqref{equ::invar} holds for all $e\in\Ec_{\text{train}}$, implying that $\Ec_{\text{train}}$ must have at least two environments.

Thus, our goal in a practical setting is to identify $(k,S)$ pairs that may satisfy the bIMP overall $e\in\Ec_{\text{train}}$. Simply put, we test whether $\E_{\Pc_e}[X_k|X_S,Y]$ is invariant. To do so, we consider a special form of the model in~\eqref{equ::Xi_def} where $X_k^e = g(X^e_S,Y^e) + \epsilon^e$ with $\epsilon^e\sim \Norm(0, (\sigma^e)^2)$ is assigned a different nonlinear additive noise function for each value of $Y^e$. Specifically, 
\begin{equation}\label{equ::model_Z}
    g(X_S^e,Y^e) =\begin{cases}
			g_1(X_S^e), & \text{if $Y^e = 1$},\\
            g_0(X_S^e), & \text{if $Y^e = 0$}.
		 \end{cases}
\end{equation}
As $X_k^e$ can be split into two models, one for each value of $Y^e$, we can perform an invariance test on each model. If both are found to be invariant, we can consider $\E_{\Pc_e}[X_k|X_S,Y]$ as a whole to be invariant. Invariance tests on additive noise models have been widely studied: Various tests have been proposed for linear~\cite{peters2016causal} and nonlinear~\cite{heinze2018invariant} models. We adopt one such approximate test from~\cite{heinze2018invariant} known as the \emph{residual distribution test} for our setting, as further detailed in Algorithm~\ref{alg:var_test}. 
\vspace{-.5em}
\begin{algorithm}
\caption{Binary Invariant Residual Distribution Test}\label{alg:var_test}
\textbf{Input:} $\bm{Y}^e$ and $\bm{X}^e$, for each $e\in\Ec_{\text{train}}$, significance level $\alpha$, and the pair $(k,S)$
\textbf{Output:} \emph{accepted} or \emph{rejected}
\begin{algorithmic}
 \State Regress $\bm{X}_{k,Y=i}$ on $\bm{X}_{S,Y=i}$ to get $\hat{g}_i$, for $i\in\{0,1\}$
 \For{each $e\in\Ec_{\text{train}}$ and $i\in\{0,1\}$}
\State $\bm{R}^e_i = \bm{X}^e_{k,Y=i} - \hat{g}_i(\bm{X}^e_{S,Y=i})$
\State $\bm{R}^{-e}_i = \bm{X}^{-e}_{k,Y=i} - \hat{g}_i(\bm{X}^{-e}_{S,Y=i})$
\State $\text{pval}_i^e = t\text{-test}(\bm{R}^{e}_i,\bm{R}^{-e}_i)$
\EndFor
\State Combine p-values in $\text{pval}_1^e$ and $\text{pval}_0^e$ via Bonferroni correction 
\If{$\min_{e\in\Ec_{\text{train}}}\text{pval}_1^e > \alpha$ \textbf{ and } $\min_{e\in\Ec_{\text{train}}}\text{pval}_2^e > \alpha$} 
\State return \emph{accepted}
\Else $ $ return \emph{rejected}
\EndIf
\end{algorithmic}
\end{algorithm}
\vspace{-1em}

We use Algorithm~\ref{alg:var_test} as an approximate test for whether $\E_{\Pc_e}[X_k|X_S,Y]$ is invariant over environments. We now employ this test to develop a practical method for estimating $Y^{\text{test}}$ which we refer to as \textsf{bIMP}. We adopt a similar approach to that of~\cite{pfister2021stabilizing} and~\cite{du2023learning} in which we test the invariance of $\E_{\Pc_e}[X_k|X_S,Y]$ for all possible pairs $(k,S)$. We then train models using the $X_k^e$ and $X^e_S$ which are \emph{accepted} according to Algorithm~\ref{alg:var_test}. Our bIMP models are a combination of two separate models trained to estimate both $\E_{\Pc_e}[X_k|X_S,Y]$ and $\E_{\Pc_e}[X_k|X_S]$. Given both of these estimates, we compute an estimate of $Y^{\text{test}}$ using~\eqref{equ::EjgXs}. As it is likely that more than one pair is accepted, the final estimate of $Y^{\text{test}}$ is the average estimate over all accepted pairs. 

While we can guarantee invariance via the bIMP, there is no guarantee that the estimation will predict well on $e^{\text{test}}$. As such, in addition to filtering pairs based on invariance, \textsf{bIMP} also filters based on a prediction score. Invariant pairs $\Tc_{\text{inv}}$ computed using~\eqref{equ::EjgXs} are filtered using the mean squared prediction error. The threshold by which the pairs are filtered is identical to the procedure proposed in~\cite{pfister2021stabilizing}. 

\begin{algorithm}
\caption{\textsf{bIMP}}\label{alg:method}
 \textbf{Input:} $\bm{Y}^e$, for each $e\in\Ec_{\text{train}}$, and $\bm{X}^e$, for each $e\in\Ec_{\text{obs}}$

\textbf{Output:} Estimate $\bm{\hat{Y}}^{\text{test}}$
\begin{algorithmic}
\State   Identify the set of all invariant pairs $\Tc_{\text{inv}}$ using Algorithm~\ref{alg:var_test}
\State  Filter pairs from $\Tc_{\text{inv}}$ based on prediction score 

\For{each $(k,S)$ in $\Tc_{\text{inv}}$}
\State Estimate $\E_{\Pc_e}[X_k|X_S,Y]$ by regressing $\bm{X}_k$ on $(\bm{X}_S,\bm{Y})$
\State Estimate $\E_{\Pc_{\text{test}}}[X_k|X_S]$ by regressing $\bm{X}_k^{\text{test}}$ on $\bm{X}^{\text{test}}_S$ 
\State Using~\eqref{equ::EjgXs}, compute $\bm{\hat{Y}}_{k,S}^{\text{test}}$ for the pair $(k,S)$
\EndFor
\State $\bm{\hat{Y}}^{\text{test}} = \frac{1}{|\Tc_{\text{inv}}|}\sum_{(k,S)\in\Tc_{\text{inv}}} \bm{\hat{Y}}_{k,S}^{\text{test}}$
\end{algorithmic}
\end{algorithm}
The \textsf{bIMP} method proposed gives freedom to the user to select the underlying models with which to estimate $\E_{\Pc_e}[X_k|X_S]$ and $\E_{\Pc_e}[X_k|X_S,Y]$. In the case of $\E_{\Pc_e}[X_k|X_S]$, we have complete freedom to select whichever model suits the data, be it linear or nonlinear.  For $\E_{\Pc_e}[X_k|X_S,Y]$, we are restricted by the additive noise of~\eqref{equ::Xi_def}. In addition, we have chosen to model $X_k$ using two sub-models, one for each value of $Y$ as in~\eqref{equ::model_Z}. This, however, is not the only option and depends on the invariance test used. When estimating each model, ordinary least squares (OLS) could be used for linear models, and a generalized additive model (GAM) or Gaussian process regression could be used for nonlinear models. In practice, we found estimating each model using OLS to be the most efficient, as fitting two nonlinear models for all possible $(k,S)$ pairs can be computationally expensive. 

\begin{remark}
There are several challenges with this approach that we leave for future work. We observe that nonlinear implementations of the invariance test (Algorithm~\ref{alg:var_test}) may lead to erroneously accepted invariant pairs. In addition to this, the complexity of training a nonlinear model for all possible $(k,S)$ pairs can be high. Finally, the effects of model misspecification can be challenging to analyze.
\end{remark}
\section{Experiments}
We provide one synthetic and two real datasets to test the effectiveness of \textsf{bIMP} and compare with the following two baselines: (1) a binary adaptation of Method~II from~\cite{peters2016causal} (\textsf{ICP}), and (2) logistic regression (\textsf{LR}). While we do not expect \textsf{LR} to perform well on unknown environments, it serves as a natural baseline. While \textsf{ICP} can handle the binary response setting via logistic regression, \textsf{SR} is specific to regression settings and thus not reported. In all experiments, we set $\alpha = 0.1$. 

\begin{center}
\begin{figure}[h!]
\centering
\vspace{-1.5em}
  \includegraphics[width=0.85\linewidth]{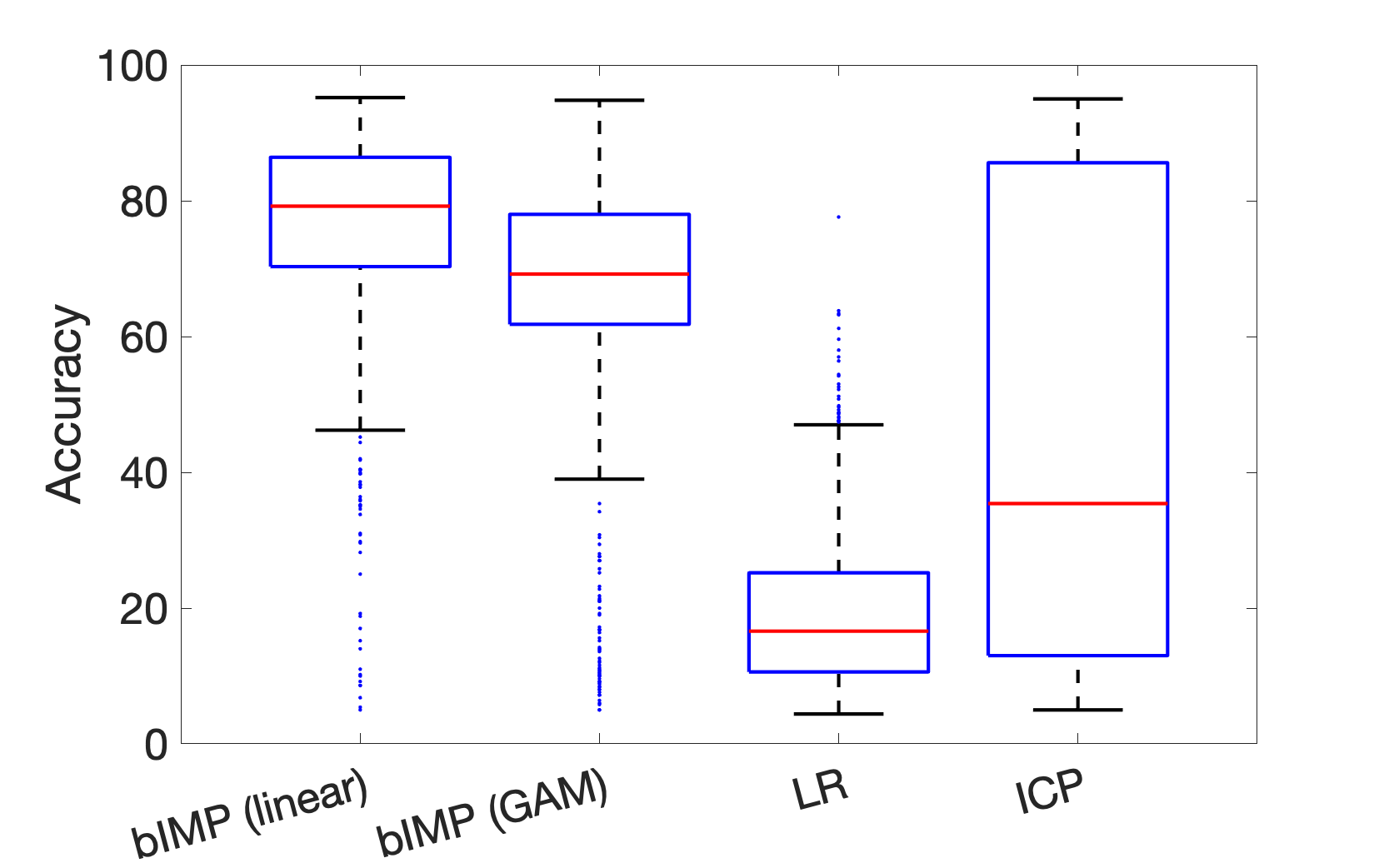}
  \caption{Simulation accuracy over $1000$ simulated datsets.}\label{fig:sim}
  \vspace{-2em}
\end{figure}
\end{center}

As there is some degree of freedom in selecting how the sub-models in \textsf{bIMP} are trained, we explore two variants of \textsf{bIMP}: \textsf{bIMP} (linear) and \textsf{bIMP} (GAM). For both variants, we follow the invariance test in Algorithm~\ref{alg:var_test} and estimate $g_1$ and $g_0$ using OLS. We estimate $\E_{\Pc_e}[X_k|X_S]$ using OLS for \textsf{bIMP} (linear), and a GAM for \textsf{bIMP} (GAM).

%\vspace{-1em}
%\begin{center}
%\begin{figure}[h]
%  \begin{subfigure}[b]{0.49\columnwidth}    
%  \includegraphics[width=\linewidth]{}
%    \caption{}
%    \label{fig:sim_acc}
%  \end{subfigure}
%  \begin{subfigure}[b]{0.49\columnwidth}
%    \includegraphics[width=\linewidth]{}
%    \caption{}
%    \label{fig:sim_mse}
%  \end{subfigure}
%  \caption{Accuracy $(a)$ and MSE $(b)$ over $1000$ simulated datsets.}\label{fig:sim}
  %\vspace{-2em}
%\end{figure}
%\end{center}

\noindent\underline{\bf Synthetic data.} The simulated dataset is generated as follows. We generate data from three environments, $e^1,e^2\in\Ec_{\text{train}}$, and $e^{\text{test}}$. The number of predictors $m$ is randomly selected from $\{3,\dots,7\}$. For each $i \in \{2,\dots,m\}$ and $e\in\Ec_{\text{train}}$, $X_i^e \sim \mathcal{N}(\mu^e_i,1)$, and $\mu^e_i$ is randomly selected on the interval $[-2,0]$ for $e=e^1$,  $[0,2]$ for $e=e^2$, and $[0,3]$ for $e=e^{\text{test}}$. Then, where $S_1 = \{2,\dots,m\}$, $Y^e|X_{S_1}^e$ follows a logistic model such that $\Pc_e(Y = 1 | X_{S_1}) = 1/(1+e^{-X_{S_1}\beta^e})$ for $e\in\Ec_{\text{train}}$. For $e^{\text{test}}$, $Y^{\text{test}}|X_{S_1}^{\text{test}}$ follows a probit model such that $Y^{\text{test}}=1$, if $X_{S_1}^{\text{test}}\beta^{\text{test}} + \epsilon < 0$, where $\epsilon\sim\mathcal{N}(0,1)$. For all $e\in\Ec_{\text{obs}}$, randomly select $\beta^e$ as $\beta^e\sim \U[0,1]$. The coefficients are then scaled such that they sum to one. For all $e\in\Ec_{\text{obs}}$, the variable $X_1^e$ is then simulated similarly to $X_k^e$ in~\eqref{equ::model_Z}. Specifically, $g_1(X_{S_1}^e) = X_{S_1}^e\eta_1$ and $g_0(X_{S_1}^e) = X_{S_1}^e\eta_0$. The noise term associated with $X_1^e$ is a standard normal. The coefficients $\eta_{1,i}\sim \U[0,1]$ and $\eta_{0,i}\sim \U[0,1]$ do not vary over the environment. The number of samples per environment is fixed to $1000$.

Simulation results on both accuracy and mean squared error (MSE) indicate that \textsf{bIMP} can generalize to the test environment while \textsf{LR} and \textsf{ICP} are not (Fig~\ref{fig:sim}). In addition, while we expect \textsf{LR} to behave poorly, \textsf{ICP} also performs poorly as all parents of $Y$ are intervened in every simulation. We note that while \textsf{bIMP} (linear) slightly outperforms \textsf{bIMP} (GAM) in this setting, it is not likely always the case. Particularly, we see that when $\eta_{0,i}$ and $\eta_{1,i}$ are similar, $\E_{\Pc_e}[X_k|X_S,Y]$ is approximately linear. This can be seen by solving for $\E_{\Pc_e}[X_k|X_S,Y]$ in~\eqref{equ::EjgXs}. In this case, $h(X_S,0)$ (a linear function) will dominate and OLS will provide a good estimate. The long version of this work will carry out extensive experiments to reveal the benefits of using GAMs. 

\vspace{-1em}
\begin{center} % Table, env variables
\begin{table}[h!]
\begin{center}
{\small
\begin{tabular}{ |c||c|c|c|c|  }
\hline 
  & \textsf{bIMP} (linear) & \textsf{bIMP} (GAM)  & \textsf{LR}\\ %[0.5ex] 
 \hline
 \hline
 Environment & \multicolumn{3}{|c|}{Accuracy} \\
 \hline
 born in US & $85.0$ & $84.9$ & $78.2$ \\
 overtime   & $68.4$ & $59.1$ & $77.0$ \\
 caucasian  & $85.0$ & $85.2$ & $78.1$ \\
 \hline
 %\hline
 %Environment & \multicolumn{3}{|c|}{MSE} \\
 %\hline
 %born in US & $0.150$ & $0.151$ & $0.217$ \\
 %overtime   & $0.316$ & $0.409$ & $0.230$ \\
 %caucasian  & $0.150$ & $0.148$ & $0.218$ \\
 %\hline
\end{tabular}
} % end small 
\caption{\emph{census}: performance and training environments.}\label{tbl::census}
%\vspace{-2em}
\end{center}
\end{table}
\end{center}
\vspace{-1.5em}

\noindent\underline{\bf Two real-world data.} We also include experiments on two real datasets: \emph{census}~\cite{misc_adult_2} and \emph{mushroom}~\cite{misc_mushroom_73}. The census dataset is data gathered from the $1994$ US census and contains $14$ societal and demographic variables such as age, education, marital status, and working class. The target variable used is whether or not an individual's income exceeded $50$k/yr. The data is first split into test and training data by whether or not a person graduated from a college. Thus, we train only on those who did not graduate college with the aim of extending our trained model to those who did. We further split the training data and run the methods on each set of training environments. The variables used to split the training data into environments are ``was the person born in the US",  ``do they regularly work more than $40$hr/week", and ``does the person identify as Caucasian". The experiment shows that \textsf{bIMP} outperforms \textsf{LR} and \textsf{ICP} in all environments aside from the \emph{overtime} environment (Table~\ref{tbl::census}). The \textsf{ICP} method returns no invariant predictors for any environment, thus no predictions can be made and no accuracy is reported; this is also the case for the \emph{mushroom} data below.
 %we do not any results for \textsf{ICP} in Table~\ref{tbl::census} as it fails to recover any invariant predictors.

\vspace{-1em}
\begin{center} % Table, env variables
\begin{table}[h!]
\begin{center}
{\small
\begin{tabular}{ |c||c|c|c|  }
\hline 
  & \textsf{bIMP} (linear) & \textsf{bIMP} (GAM) & \textsf{LR} \\ %[0.5ex] 
 \hline
 \hline
 Environment & \multicolumn{3}{|c|}{Accuracy} \\
 \hline
 meadows & $76.0$  & $87.5$ & $46.2$ \\
 paths   & $88.1$  & $90.9$ & $11.8$ \\
 \hline
 %\hline
 %Environment & \multicolumn{3}{|c|}{MSE} \\
 %\hline
 %meadows & $0.119$ & $0.091$ & $0.538$  \\
 %paths   & $0.240$ & $0.125$ & $0.882$ \\
 %\hline
\end{tabular}
} % end small
\caption{\emph{mushroom}: performance and training environments.}\label{tbl::mush}
\vspace{-2em}
\end{center}
\end{table}
\end{center}
%\vspace{-1em}

The \emph{mushroom} dataset contains $16$ features related to naturally growing mushrooms' size, shape, and color and showcases how the proposed approach can handle discrete and categorical data. We aim to predict whether or not a mushroom is edible based on these factors. The environments on which we predict are the habitats in which the mushrooms grow. Specifically, we train on mushrooms that grow in grass or urban habitats and test on mushrooms that grow in meadows or paths. Results in Table~\ref{tbl::mush} indicate that \textsf{bIMP} outperforms \textsf{ICP} and \textsf{LR} for both the linear and GAM variants, while the GAM variant performed the best. 
%Again, \textsf{ICP} did not identify any invariant predictors and is not accounted for in Table~\ref{tbl::mush}. 

\section{Acknowledgements}
 We thank the anonymous reviewers for their helpful comments that improved the quality of this work.

\appendix
\begin{proof}[Proof of Proposition~\ref{prop::suff}]
    First, we show that~\eqref{equ::EjgXs} holds for any $e\in\Ec_{\text{obs}}$.  Without loss of generality, let $X^e_i$ be continuous for all $i\in\{1,\dots,m\}$. The pdf of $X_k^e|X_S^e$ for any $e\in\Ec_{\text{obs}}$ is
\begin{align*}
    f&_{X_k^e|X_S^e}(x_k|x) \\
    %&= f_{X_k^e,Y^e|X_S^e}(x_k,1|x) + f_{X_k^e,Y^e|X_S^e}(x_k,0|x) \\
    &= f_{X_k^e|X_S^e,Y^e}(x_k|x,1)\cdot p_{Y^e|X_S^e}(1|x) \\
    &\qquad +  f_{X_k^e|X_S^e,Y^e}(x_k|x,0)\cdot\left[1-p_{Y^e|X_S^e}(1|x)\right] \\
    & = p_{Y^e|X_S^e}(1|x)\left[f_{X_k^e|X_S^e,Y^e}(x_k|x,1) - f_{X_k^e|X_S^e,Y^e}(x_k|x,0)\right] \\
    &\qquad + f_{X_k^e|X_S^e,Y^e}(x_k|x,0).\numberthis\label{equ::fxi|xS}
\end{align*}
Then using~\eqref{equ::fxi|xS}, we can write $\E_{\Pc_e}[X_k|X_S=x]$ as
\begin{align*}
   % \E_{\Pc_e}&[X_k|X_S=x] \\
     &\int_{-\infty}^{\infty} x_k\cdot f_{X_k^e|X_S^e}(x_k|x) \,dx_k \\
    %&=   p_{Y^e|X_S^e}(1|x) \int_{-\infty}^{\infty} x_k\cdot f_{X^e_k|X_S^e,Y^e}(x_k|x,1) \,dx_k\\
    %&\qquad - p_{Y^e|X_S^e}(1|x) \int_{-\infty}^{\infty} x_k\cdot f_{X_k^e|X_S^e,Y^e}(x_k|x,0) \,dx_k \\
    %& \qquad + \int_{-\infty}^{\infty} x_k \cdot f_{X_k^e|X_S^e,Y^e}(x_k|x,0) \,dx_k \\
    &= \E_{\Pc_e}[Y|X_S=x]\cdot\E_{\Pc_e}[X_k|X_S=x,Y=1] \\
    &\qquad - \E_{\Pc_e}[Y|X_S=x]\cdot\E_{\Pc_e}[X_k|X_S=x,Y=0] \\
    &\qquad + \E_{\Pc_e}[X_k|X_S=x,Y=0] . \numberthis\label{equ::EXjgXs}
\end{align*}
Thus, $\E_{\Pc_e}[Y|X_S]$ can be written as
\begin{equation}\label{equ::EjgXs_ther}
     \frac{\E_{\Pc_e}[X_k|X_S] - \E_{\Pc_e}[X_k|X_S,Y=0]}{\E_{\Pc_e}[X_k|X_S,Y=1] - \E_{\Pc_e}[X_k|X_S,Y=0]}.
\end{equation}

We  now show (I) $\E_{\Pc_e}[X_k|X_S=x,Y=y]$ does not depend on $e$ and (II) the denominator of~\eqref{equ::EjgXs_ther} is non-zero. Since $X_S^e = (X_R^e,X_Q^e)$, 
\begin{align*}
\E_{\Pc_e}[X_k&|X_S,Y]  = \E_{\Pc_e}[X_k|X_R,X_Q,Y] \overset{(a)}{=} \E_{\Pc_e}[X_k|X_R,Y] \\
   & \overset{(b)}{=} \E_{\Pc_e}[g(X_R,Y) + \epsilon |X_R,Y] = g(X_R^e,Y^e), \numberthis
\end{align*}
where $(a)$ follows since $X_Q^e \independent X_k^e | X_R^e,Y^e$, $(b)$ follows from the assumption $X^e_k = g(X_R^e,Y^e) + \epsilon^e$, and $(c)$ follows since $\epsilon$ has zero mean. Thus, the $\E_{\Pc_e}[X_k|X_S=(x_Q,x_R),Y=y]$ does not depend on $e$ as $\E_{\Pc_e}[X_k|X_S=(x_Q,x_R),Y=y] = g(x_R,y)$. As the output of the function $g$ is not constant with regards to any of its inputs as in~\eqref{equ::Xi_def}, the denominator of~\eqref{equ::EjgXs_ther} is non-zero.
\end{proof}

\balance

%%%%%%
%% To balance the columns at the last page of the paper use this
%% command:
%%
%\enlargethispage{-1.2cm} 
%%
%% If the balancing should occur in the middle of the references, use
%% the following trigger:
%%
%\IEEEtriggeratref{4}
%%
%% which triggers a \newpage (i.e., new column) just before the given
%% reference number. Note that you need to adapt this if you modify
%% the paper.  The "triggered" command can be changed if desired:
%%
%\IEEEtriggercmd{\enlargethispage{-20cm}}
%%
%%%%%%

%%%%%%
%% References:
%% We recommend the usage of BibTeX:
%%
%\bibliographystyle{IEEEtran}
%\bibliography{definitions,bibliofile}
%%
%% where we here have assumed the existence of the files
%% definitions.bib and bibliofile.bib.
%% BibTeX documentation can be obtained at:
%% http://www.ctan.org/tex-archive/biblio/bibtex/contrib/doc/
%%%%%%

\balance

\newpage
\bibliographystyle{IEEEtran}
\bibliography{sample}

\end{document}